\newcommand{\ignore}[1]{}
\begin{document}

\title{Homomorphisms are indeed a good basis for counting:\\Three fixed-template dichotomy theorems, for the price of one}

\author{Hubie Chen\\
Departamento LSI\\
Facultad de Inform\'{a}tica\\
Universidad del Pa\'{i}s Vasco\\
San Sebasti\'{a}n, Spain\\
\emph{and}\\
 IKERBASQUE, Basque Foundation for Science\\
Bilbao, Spain\\}

\date{ } 

\maketitle

\newcommand{\confversion}[1]{}
\newcommand{\longversion}[1]{#1}


\newtheorem{theorem}{Theorem}[section]
\newtheorem{conjecture}[theorem]{Conjecture}
\newtheorem{corollary}[theorem]{Corollary}
\newtheorem{proposition}[theorem]{Proposition}
\newtheorem{prop}[theorem]{Proposition}
\newtheorem{lemma}[theorem]{Lemma}
\newtheorem{remarkcore}[theorem]{Remark}
\newtheorem{exercisecore}[theorem]{Exercise}
\newtheorem{examplecore}[theorem]{Example}
\newtheorem{assumptioncore}[theorem]{Assumption}

\newenvironment{assumption}
  {\begin{assumptioncore}\rm}
  {\hfill $\Box$\end{assumptioncore}}

\newenvironment{example}
  {\begin{examplecore}\rm}
  {\hfill $\Box$\end{examplecore}}

\newenvironment{exercise}
 {\begin{exercisecore}\rm}
 {\hfill $\Box$\end{exercisecore}}

\newenvironment{remark}
 {\begin{remarkcore}\rm}
 {\hfill $\Box$\end{remarkcore}}

\newenvironment{proof}{\noindent\textbf{Proof\/}.}{$\Box$ \vspace{1mm}}

\newtheorem{researchq}{Research Question}

\newtheorem{newremarkcore}[theorem]{Remark}

\newenvironment{newremark}
  {\begin{newremarkcore}\rm}
  {\end{newremarkcore}}

\newtheorem{definitioncore}[theorem]{Definition}

\newenvironment{definition}
  {\begin{definitioncore}\rm}
  {\end{definitioncore}}

\newcommand{\Q}{\mathbb{Q}}

\newcommand{\ppequiv}{\mathsf{PPEQ}}
\newcommand{\eq}{\mathsf{EQ}}
\newcommand{\iso}{\mathsf{ISO}}
\newcommand{\ppeq}{\ppequiv}
\newcommand{\ppiso}{\mathsf{PPISO}}
\newcommand{\boolppiso}{\mathsf{BOOL}\mbox{-}\mathsf{PPISO}}
\newcommand{\csp}{\mathsf{CSP}}
\newcommand{\scsp}{\mathsf{SCSP}}
\newcommand{\gi}{\mathsf{GI}}
\newcommand{\ci}{\mathsf{CI}}
\newcommand{\rela}{\mathbf{A}}
\newcommand{\relb}{\mathbf{B}}
\newcommand{\relc}{\mathbf{C}}
\newcommand{\alga}{\mathbb{A}}
\newcommand{\algb}{\mathbb{B}}
\newcommand{\algab}{\mathbb{A}_{\relb}}

\newcommand{\idemp}{I}

\newcommand{\varv}{\mathcal{V}}
\newcommand{\variety}{\mathcal{V}}
\newcommand{\false}{\mathsf{false}}
\newcommand{\true}{\mathsf{true}}
\newcommand{\pol}{\mathsf{Pol}}
\newcommand{\inv}{\mathsf{Inv}}
\newcommand{\cc}{\mathcal{C}}
\newcommand{\alg}{\mathsf{Alg}}
\newcommand{\pitwo}{\Pi_2^p}
\newcommand{\sigmatwo}{\Sigma_2^p}
\newcommand{\pithree}{\Pi_3^p}
\newcommand{\sigmathree}{\Sigma_3^p}

\newcommand{\fancyg}{\mathcal{G}}
\newcommand{\tw}{\mathsf{tw}}

\newcommand{\mc}{\mathsf{MC}}
\newcommand{\mcs}{\mathsf{MC}_s}

\newcommand{\mcb}{\mathsf{MC_b}}

\newcommand{\qc}{\mathrm{QC}}
\newcommand{\normqc}{\mathrm{norm\mbox{-}QC}}
\newcommand{\rqc}{\mathsf{RQC\mbox{-}MC}}

\newcommand{\qcfo}{\mathrm{QCFO}}
\newcommand{\qcfofk}{\qcfo_{\forall}^k}
\newcommand{\qcfoek}{\qcfo_{\exists}^k}

\newcommand{\fo}{\mathrm{FO}}
\newcommand{\fok}{\mathrm{FO}^k}

\newcommand{\tup}[1]{\overline{#1}}

\newcommand{\nn}{\mathsf{nn}}
\newcommand{\bush}{\mathsf{bush}}
\newcommand{\width}{\mathsf{width}}

\newcommand{\un}{N^{\forall}}
\newcommand{\en}{N^{\exists}}

\newcommand{\ord}{\tup{u}}
\newcommand{\ordp}[1]{\tup{#1}}

\newcommand{\gc}{G^{-C}}

\newcommand{\ar}{\mathrm{ar}}
\newcommand{\free}{\mathsf{free}}
\newcommand{\vars}{\mathsf{vars}}

\newcommand{\qed}{}

\newcommand{\f}{\mathcal{F}}

\newcommand{\pow}{\wp}

\newcommand{\N}{\mathbb{N}}

\newcommand{\param}[1]{\mathsf{param}\textup{-}#1}

\newcommand{\dom}{\mathsf{dom}}

\newcommand{\org}{\mathrm{org}^+}
\newcommand{\lay}{\mathrm{lay}^+}

\newcommand{\und}[1]{\underline{#1}}

\newcommand{\clo}{\mathsf{closure}}


\newcommand{\thick}{\mathsf{thick}}
\newcommand{\thickl}{\thick_l}
\newcommand{\localthickl}{\mathsf{local}\textup{-}\thickl}
\newcommand{\quantthickl}{\mathsf{quant}\textup{-}\thickl}

\newcommand{\lowerdeg}{\mathsf{lower}\textup{-}\mathsf{deg}}

\newcommand{\restrict}{\upharpoonright}

\renewcommand{\nu}[1]{\textup{{\small $\mathsf{nu}$}-{$ #1 $}}}

\newcommand{\case}[1]{\textup{{\small $\mathsf{case}$}-{$ #1 $}}}

\newcommand{\coclique}{\textup{{\small  $\mathsf{co}$}-{\small $\mathsf{CLIQUE}$}}}
\newcommand{\clique}{\textup{{\small $\mathsf{CLIQUE}$}}}

\newcommand{\caseclique}{\case{\clique}}
\newcommand{\casecoclique}{\case{\coclique}}

\newcommand{\fpt}{\textup{\small $\mathsf{FPT}$}}
\newcommand{\wone}{\textup{\small $\mathsf{W[1]}$}}
\newcommand{\cowone}{\textup{\small $\mathsf{co}$-$\mathsf{W[1]}$}}

\renewcommand{\S}{\mathcal{S}}
\newcommand{\G}{\mathcal{G}}

\newcommand{\image}{\mathsf{image}}

\newcommand{\str}{\mathrm{STR}}

\begin{abstract}
\begin{quote}
Many natural combinatorial quantities can be expressed by
counting the number of homomorphisms to a fixed relational structure.
For example, the number of 3-colorings of an undirected graph $G$
is equal to the number of homomorphisms from $G$ to the
$3$-clique.  In this setup, the structure receiving the homomorphisms is
often referred to as a \emph{template}; we use the term
\emph{template function} to refer to a function, from
structures to natural numbers,
that is definable as the number
of homomorphisms to a fixed template.
There is a literature that studies the complexity of
template functions.

The present work is concerned with relating template functions
to the problems of counting, with respect to various fixed templates,
 the number of two particular types of homomorphisms:
 surjective homomorphisms and what we term \emph{condensations}.
A surjective homomorphism is a homomorphism that 
maps the universe of the first structure surjectively onto
the universe of the second structure;
a condensation is a homomorphism that, in addition,
maps each relation of the first structure surjectively onto
the corresponding relation of the second structure.

In this article, we explain how any problem
of counting surjective homomorphisms to a fixed template 
is polynomial-time equivalent to computing a linear combination
of template functions; we also show this
for any problem of counting condensations to a fixed template.
Via a theorem that characterizes
the complexity of computing such a linear combination,
we show how
a known dichotomy for template functions
can be used to infer a dichotomy for counting surjective
homomorphisms on fixed templates,
and likewise a dichotomy for counting condensations
on fixed templates.
Our study is strongly inspired by, based on,
and can be viewed as a dual of
the graph motif framework of Curticapean, Dell, and Marx (STOC 2017); 
that framework 
is in turn based on work of Lov\'asz (2012).
\end{quote}
\end{abstract}

\newpage

\section{Preliminaries}

When $f: A \to B$ is a map and $A' \subseteq A$,
we use $f(A')$ to denote the
set $\{ f(a) ~|~ a \in A' \}$.

\subsection{Structures, homomorphisms, and company}

A \emph{signature} is a set of \emph{relation symbols};
each relation symbol $R$ has an associated arity (a natural number),
denoted by $\ar(R)$.
A \emph{structure} $\relb$ over signature $\sigma$
consists of a \emph{universe} $B$ which is a set,
and an interpretation $R^{\relb} \subseteq B^{\ar(R)}$
for each relation symbol $R \in \sigma$.
We use $||\relb||$ to denote the \emph{total size}
of $\relb$, defined as
$|B| + \sum_{R \in \sigma} |R^{\relb}|$.
We will in general use the symbols $\rela, \relb, \ldots$ to denote
structures, and the symbols $A, B, \ldots$ to denote their respective
universes.
In this article, we 
assume that signatures under discussion are finite, and 
assume that all structures under discussion are \emph{finite}; 
a structure is \emph{finite}
if its universe is finite.

Let $\relb$ be a structure over signature $\sigma$.
When $B' \subseteq B$, we define
$\relb[B']$ as the structure
with universe $B'$
and where $R^{\relb[B']} = R^{\relb} \cap B'^{\ar(R)}$.
We define an \emph{induced substructure} of $\relb$ to 
be a structure of the form $\relb[B']$, where $B' \subseteq B$.
Observe that a structure $\rela$ has $2^{|A|}$ induced substructures.
We define a \emph{deduct} of $\relb$ to be a structure
obtained from $\relb$ by removing tuples from relations of $\relb$,
that is, a structure $\relc$ (over signature $\sigma$)
is a \emph{deduct} of $\relb$ 
if $C = B$ and, for each $R \in \sigma$,
it holds that $R^\relc \subseteq R^{\relb}$.

Let $\rela$ and $\relb$ be structures over the same signature $\sigma$.
A \emph{homomorphism} from $\rela$ to $\relb$
is a map $h: A \to B$ such that
for each relation symbol $R \in \sigma$,
it holds that $h(R^{\rela}) \subseteq R^{\relb}$.
A \emph{surjective homomorphism} from $\rela$ to $\relb$
is a homomorphism such that $h(A) = B$, that is, 
such that $h$ is surjective
as a mapping from the set $A$ to the set $B$.
A \emph{condensation} from $\rela$ to $\relb$
is a surjective homomorphism satisfying the condition that
for each relation symbol $R \in \sigma$,
it holds that $h(R^{\rela}) = R^{\relb}$.
This condition is sometimes referred to as \emph{edge-surjectivity}
in graph-theoretic contexts.\footnote{
We remark that some authors use the term \emph{surjective homomorphism}
to refer to what we refer to as a condensation.}
Notions similar to the notion of \emph{condensation}
have been studied in the literature: 
notably, the term \emph{compaction} is sometimes used 
(for example, in~\cite{FockeGoldbergZivny17-counting-surjhoms-and-compactions})
to refer to
a homomorphism between graphs that maps the edge
relation of the first graph surjectively onto
the relation that contains the non-loop edges of the second graph.

Two structures $\relb$, $\relb'$ are \emph{homomorphically equivalent}
if there exists a homomorphism from $\relb$ to $\relb'$
and
there exists a homomorphism from $\relb'$ to $\relb$.

Throughout, we tacitly use the fact that the composition of
a homomorphism from $\rela$ to $\relb$
and a homomorphism from $\relb$ to $\relc$
is a homomorphism from $\rela$ to $\relc$.

\subsection{Computational problems}

\newcommand{\chom}{\ensuremath{\sharp\mathsf{HOM}}}
\newcommand{\cshom}{\ensuremath{\sharp\mathsf{SURJHOM}}}
\newcommand{\ccondens}{\ensuremath{\sharp\mathsf{CONDENS}}}

\renewcommand{\hom}{\ensuremath{\mathrm{Hom}}}
\newcommand{\shom}{\ensuremath{\mathrm{Surjhom}}}
\newcommand{\condens}{\ensuremath{\mathrm{Condens}}}
\newcommand{\indsub}{\ensuremath{\mathrm{Indsub}}}
\newcommand{\deducts}{\ensuremath{\mathrm{Deducts}}}

We now define the computational problems to be studied.
For each structure $\relb$ over signature $\sigma$:

\begin{itemize}

\item 
Define
$\chom(\relb)$ to be the problem of computing,
given a structure $\rela$ over signature $\sigma$,
the number of homomorphisms from $\rela$ to $\relb$.

\item 
Define
$\cshom(\relb)$ to be the problem of computing,
given a structure $\rela$ over signature $\sigma$,
the number of surjective homomorphisms from $\rela$ to $\relb$.

\item 
Define
$\ccondens(\relb)$ to be the problem of computing,
given a structure $\rela$ over signature $\sigma$,
the number of condensations from $\rela$ to $\relb$.

\end{itemize}

\section{Linear combinations of homomorphisms}

Our development is strongly inspired by and based on 
the framework of Curticapean, Dell, and Marx~\cite{CurticapeanDellMarx17-homomorphisms-good-basis},
which in turn was based on work of Lov\'asz~\cite{Lovasz67-operations-structures,Lovasz12-book-networks-and-limits}.
It is also informed by the theory developed by the 
current author with Mengel~\cite{ChenMengel14-pp-arxiv,ChenMengel15-pp-icdt,ChenMengel16-pods-ep-counting,ChenMengel17-lics-logic-counting}.
In these works, a dual setup is considered, where
one fixes the structure $\rela$ from which homomorphisms originate,
and counts the number of homomorphisms that an input structure
receives from $\rela$.
Many of our observations and results can be seen to have duals
in the cited works.

For each signature $\sigma$,
let $\str[\sigma]$ denote the class of all structures over $\sigma$,
and
fix
$\str^*[\sigma]$ to be a subclass of $\str[\sigma]$
that contains exactly one structure from
each isomorphism class of structures contained in $\str[\sigma]$.

For structures $\rela$, $\relb$ over the same signature,
we use:
\begin{itemize}

\item 
$\hom(\rela,\relb)$ to denote the number of homomorphisms
from $\rela$ to $\relb$,

\item
$\shom(\rela,\relb)$ to denote the number of surjective homomorphisms
from $\rela$ to $\relb$,

\item
$\condens(\rela, \relb)$ to denote the number of
condensations from $\rela$ to $\relb$,

\item
$\indsub(\relb', \relb)$ to denote the number of induced substructures
of $\relb$ that are isomorphic to $\relb'$, and

\item
$\deducts(\relb', \relb)$ to denote the number of deducts of
$\relb$ that are isomorphic to $\relb'$.

\end{itemize}
We use $\hom(\cdot,\relb)$ to denote the mapping
that sends a structure $\rela$ to $\hom(\rela,\relb)$,
and use $\shom(\cdot,\relb)$, etc. analogously.

Observe that
\begin{equation}
\label{eq:hom}
\hom(\rela,\relb) = \sum_{\relb' \in \str^*[\sigma]} 
\shom(\rela,\relb') \cdot \indsub(\relb',\relb).
\end{equation}

We briefly justify this as follows.  Each homomorphism
$h$ from $\rela$ to $\relb$ is a surjective homomorphism
from $\rela$ onto an induced substructure of $\relb$, namely,
onto $\relb[h(A)]$.  
Let $\relb' \in \str^*[\sigma]$ be 
isomorphic to an induced substructure of $\relb$, and let us count
the number of homomorphisms $h$ from $\rela$ to $\relb$ such that
$\relb[h(A)]$ is isomorphic to $\relb'$.
Let $\relb_1, \ldots, \relb_k$ be a list of all induced substructures
of $\relb$ that are isomorphic to $\relb'$.
Then, we have 
$\shom(\rela,\relb_1) = \cdots = \shom(\rela,\relb_k) = \shom(\rela,\relb')$
and $k = \indsub(\relb',\relb)$,
so the desired number is 
$\shom(\rela,\relb_1) + \cdots + \shom(\rela,\relb_k)$,
which is equal to
$\shom(\rela,\relb') \cdot \indsub(\relb',\relb)$.

Observe that 
\begin{equation}
\label{eq:shom}
\shom(\rela,\relb) = 
\sum_{\relb' \in \str^*[\sigma]}
\condens(\rela,\relb') \cdot \deducts(\relb', \relb).
\end{equation}
The justification for this equation has the same flavor as
that of the previous equation.  
Each surjective homomorphism $h$ from $\rela$ to $\relb$
is a condensation from $\rela$ to a deduct of $\relb$;
when $\relb'$ is isomorphic to a deduct of $\relb$,
the product $\condens(\rela,\relb') \cdot \deducts(\relb', \relb)$
is the number of condensations from $\rela$ to 
a deduct of $\relb$ that is isomorphic to $\relb'$.

It is direct from Equation~\ref{eq:hom}
that 
\begin{equation}
\label{eq:shom-new}
\shom(\rela,\relb) = \hom(\rela,\relb) - 
\sum_{\relb' \in \str^*[\sigma], |B'| < |B|} 
\shom(\rela,\relb') \cdot \indsub(\relb',\relb).
\end{equation}
From this, one can straightforwardly verify by induction on $|B|$
that the function $\shom(\cdot,\relb)$
can be expressed as a linear combination of functions
each having the form
$\hom(\cdot,\relc)$; moreover, such a linear combination is
computable from $\relb$.  We formalize this as follows.

\begin{prop}
\label{prop:shom}
There exists an algorithm that,
given as input a structure $\relb$ over signature $\sigma$,
outputs a list 
$(\beta_1,\relb_1),\ldots,(\beta_k,\relb_k) \in \Q \times \str[\sigma]$,
where the 
values $\beta_i$ are non-zero
and the structures
$\relb_i$ are pairwise non-isomorphic and
such that, for all structures $\rela$, it holds that
$$\shom(\rela,\relb) = 
\beta_1 \cdot \hom(\rela,\relb_1)
+ \cdots + 
\beta_k \cdot \hom(\rela,\relb_k).$$
\end{prop}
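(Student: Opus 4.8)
The plan is to formalize the inductive argument sketched just before the statement: proceed by induction on $|B|$, at each step substituting Equation~(\ref{eq:shom-new}), and finish with a normalization step that puts the resulting linear combination into the required shape (non-zero rational coefficients, pairwise non-isomorphic structures). Throughout I use the elementary fact that $\hom(\rela,\relc) = \hom(\rela,\relc')$ whenever $\relc$ and $\relc'$ are isomorphic: composing with a fixed isomorphism $\relc \to \relc'$ is a bijection between the two homomorphism sets. Hence, once $\shom(\cdot,\relb)$ has been written as a $\Q$-linear combination of functions $\hom(\cdot,\relc)$ with the $\relc$ arbitrary structures, one may replace each $\relc$ by its canonical representative in $\str^*[\sigma]$, then merge any two terms whose structures coincide (equivalently, are isomorphic) by adding their coefficients, and finally discard every term whose coefficient has become $0$; the outcome is a list of exactly the demanded form. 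I also record that the sum in~(\ref{eq:shom-new}) is effectively finite: $\indsub(\relb',\relb) = 0$ unless $\relb'$ is isomorphic to an induced substructure of $\relb$, so the sum ranges in effect over the at most $2^{|B|}$ isomorphism types of induced substructures of $\relb$ with universe of size $< |B|$.

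For the base case $|B| = 0$ the structure $\relb$ is the empty structure over $\sigma$; the sum in~(\ref{eq:shom-new}) is empty, so $\shom(\cdot,\relb) = \hom(\cdot,\relb)$, and the algorithm outputs the list consisting of the single pair $(1,\relb)$. For the inductive step, assume $|B| \geq 1$ and that the claim holds for all structures with strictly smaller universe. The algorithm enumerates the subsets $B' \subseteq B$, forms the induced substructures $\relb[B']$, classifies them up to isomorphism (isomorphism of finite structures is decidable), and for each isomorphism type $\relb'$ with $|B'| < |B|$ computes $\indsub(\relb',\relb)$ and, by recursion, a linear combination $\shom(\cdot,\relb') = \sum_j \gamma^{(\relb')}_j \hom(\cdot,\relc^{(\relb')}_j)$. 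Substituting all of these into~(\ref{eq:shom-new}) expresses $\shom(\cdot,\relb)$ as a $\Q$-linear combination of finitely many functions $\hom(\cdot,\relc)$: the term $\hom(\cdot,\relb)$ with coefficient $1$, together with the terms $\hom(\cdot,\relc^{(\relb')}_j)$ with coefficient $-\indsub(\relb',\relb)\cdot\gamma^{(\relb')}_j$. Applying the normalization described above yields the output list. Every operation (subset enumeration, isomorphism testing, rational arithmetic, recursion on smaller structures) is effective, so this is an algorithm; correctness of the identity is immediate from~(\ref{eq:shom-new}) and the inductive hypothesis, and the normalization step guarantees that the coefficients are non-zero and the structures pairwise non-isomorphic.

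I do not anticipate a genuine obstacle; the proposition is little more than a careful restatement of the remark preceding it, and the only real content is the finiteness accounting together with the normalization bookkeeping. Two points deserve a moment's care: (i) the base case must be the empty structure rather than a one-element structure --- for a one-element $\relb$ and an $\rela$ with empty universe one has $\shom(\rela,\relb) = 0 \neq 1 = \hom(\rela,\relb)$, but~(\ref{eq:shom-new}) nevertheless handles this case correctly through its empty-structure summand; and (ii) the normalization is what actually delivers the normal form, so it should be spelled out. It is also worth noting (although the statement does not demand it) that every structure $\relb_i$ produced is isomorphic to an induced substructure of $\relb$, and that $\hom(\cdot,\relb)$ itself always survives with coefficient exactly $1$, since every other structure that arises anywhere in the recursion has a universe of size strictly less than $|B|$ and hence cannot be isomorphic to $\relb$, so nothing can cancel against that term. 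Finally, the argument can equivalently be phrased without induction: over the finitely many isomorphism types of induced substructures of $\relb$, ordered by universe size, Equation~(\ref{eq:hom}) becomes a linear system whose coefficient matrix, with entries $\indsub(\relb',\relb'')$, is triangular with $1$'s on the diagonal, hence invertible over $\Q$ (indeed over the integers) and effectively invertible; the row of the inverse indexed by $\relb$ gives the desired linear combination.
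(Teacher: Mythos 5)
Your proof is correct and follows exactly the route the paper intends: induction on $|B|$ via Equation~(\ref{eq:shom-new}), followed by the bookkeeping (canonical representatives, merging isomorphic terms, discarding zero coefficients) that the paper leaves implicit in its ``straightforwardly verify by induction'' remark. Your care with the empty-structure base case and your observation that $\hom(\cdot,\relb)$ retains coefficient $1$ are both consistent with (and in the latter case anticipated by) claims the paper makes elsewhere, and your alternative matrix-inversion phrasing matches the paper's own remark on the triangularity of $\indsub^*$.
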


In an analogous fashion,
it is direct from Equation~\ref{eq:shom}
that
\begin{equation}
\label{eq:condens-new}
\condens(\rela,\relb) = \shom(\rela,\relb) -
\sum_{\relb' \in \str^*[\sigma], |B'| = |B|, ||\relb'|| < ||\relb||}
\condens(\rela,\relb') \cdot \deducts(\relb', \relb).
\end{equation}
One can verify by induction that the function $\condens(\cdot,\relb)$
can be expressed as a linear combination of functions each having the form
$\shom(\cdot,\relc)$; such a linear combination is computable from
$\relb$, and so in conjunction with Proposition~\ref{prop:shom},
we obtain the following.

\begin{prop}
\label{prop:condens}
There exists an algorithm that,
given as input a structure $\relb$ over signature $\sigma$,
outputs a list 
$(\beta_1,\relb_1),\ldots,(\beta_k,\relb_k) \in \Q \times \str[\sigma]$,
where the 
values $\beta_i$ are non-zero
and the structures
$\relb_i$ are pairwise non-isomorphic and
such that, for all structures $\rela$, it holds that
$$\condens(\rela,\relb) = 
\beta_1 \cdot \hom(\rela,\relb_1)
+ \cdots + 
\beta_k \cdot \hom(\rela,\relb_k).$$
\end{prop}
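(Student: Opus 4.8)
The plan is to follow the route already indicated before the statement: first express $\condens(\cdot,\relb)$ as a finite, effectively computable $\Q$-linear combination of functions of the form $\shom(\cdot,\relc)$, then plug in the expansion of each such $\shom$-term supplied by Proposition~\ref{prop:shom}, and finally normalize the resulting expression into the shape demanded by the statement. The only nontrivial ingredient is the well-foundedness of the recursion hidden in Equation~\ref{eq:condens-new}, and the only place requiring care is the concluding normalization step.

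For the first step I would induct on the total size $||\relb||$, establishing the invariant that for every structure $\relb$ over $\sigma$ there is a finite list $(\gamma_1,\relc_1),\ldots,(\gamma_m,\relc_m) \in \Q \times \str[\sigma]$, computable from $\relb$, with each $\relc_j$ having universe $B$ and with $\condens(\rela,\relb) = \sum_j \gamma_j \shom(\rela,\relc_j)$ for all $\rela$. The base case is when every relation of $\relb$ is empty; then $\relb$ has no deduct $\relb'$ with $||\relb'|| < ||\relb||$, so Equation~\ref{eq:condens-new} reduces to $\condens(\rela,\relb) = \shom(\rela,\relb)$. For the inductive step, Equation~\ref{eq:condens-new} writes $\condens(\cdot,\relb)$ as $\shom(\cdot,\relb)$ minus a sum, over the deducts $\relb'$ of $\relb$ having the same universe $B$ and strictly smaller total size, of $\deducts(\relb',\relb) \cdot \condens(\cdot,\relb')$; the coefficients $\deducts(\relb',\relb)$ are non-negative integers computable by enumerating the (finitely many) deducts of $\relb$ and counting isomorphic copies, and each $\condens(\cdot,\relb')$ may be replaced, via the induction hypothesis, by its $\shom$-expansion. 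Collecting terms yields the desired representation; finiteness of the list is automatic since each $\relc_j$ has universe exactly $B$ and there are only finitely many structures over the finite signature $\sigma$ on a fixed finite universe.

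For the second step, apply the algorithm of Proposition~\ref{prop:shom} to each $\relc_j$ to obtain a computable list with $\shom(\cdot,\relc_j) = \sum_i \beta_{j,i}\hom(\cdot,\relb_{j,i})$; substituting gives $\condens(\rela,\relb) = \sum_j \sum_i \gamma_j\beta_{j,i}\hom(\rela,\relb_{j,i})$, a finite $\Q$-linear combination of $\hom$-functions computable from $\relb$. For the final step, group the structures $\relb_{j,i}$ by isomorphism type (decidable, as all structures are finite), pick one representative per type, sum the coefficients within each group using $\hom(\cdot,\relc) = \hom(\cdot,\relc')$ whenever $\relc \cong \relc'$, and discard every group whose total coefficient is $0$. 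The surviving pairs $(\beta_i,\relb_i)$ have non-zero $\beta_i$ and pairwise non-isomorphic $\relb_i$ and satisfy $\condens(\rela,\relb) = \sum_i \beta_i\hom(\rela,\relb_i)$ for all $\rela$, and, since every step is effective, they assemble into the claimed algorithm. I do not anticipate a genuine obstacle here; the subtlety to watch is precisely this last merge-and-prune step, since a naive substitution can leave repeated or zero-coefficient terms, whereas the statement explicitly requires non-zero coefficients and non-isomorphic structures — but this is routine bookkeeping, and termination of the recursion is guaranteed because the deducts invoked in Equation~\ref{eq:condens-new} all have total size strictly below $||\relb||$.
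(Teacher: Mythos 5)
Your proposal is correct and follows essentially the same route as the paper: it uses Equation~\ref{eq:condens-new} with induction on total size to express $\condens(\cdot,\relb)$ as a computable linear combination of functions $\shom(\cdot,\relc)$, and then invokes Proposition~\ref{prop:shom} to convert these to $\hom$-functions. Your explicit treatment of the final merge-and-prune step (grouping isomorphic structures and discarding zero coefficients) is a detail the paper leaves implicit, but it is exactly the right bookkeeping.
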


\begin{remark}
We can write
Equation~\ref{eq:hom} 
in the following form:
$$\hom(\rela,\relb) = \sum_{\relb'} \shom(\rela,\relb'),$$
where the sum is over all induced substructures $\relb'$ of $\relb$;
analogously, we can write
Equation~\ref{eq:shom} 
in the following form:
$$\shom(\rela,\relb) = \sum_{\relb'} \condens(\rela,\relb'),$$
where the sum is over all deducts $\relb'$ of $\relb$.
From these forms, one can use M\"obius inversion on posets
to express $\shom(\cdot, \relb)$ as a linear combination
of functions $\hom(\cdot, \relb)$;
and likewise to express $\condens(\cdot,\relb)$
as a linear combination of functions $\shom(\cdot,\relc)$,
which linear combination can then be expressed as a linear combination of 
functions $\hom(\cdot,\relb)$.
\end{remark}

\begin{remark}
Equations~\ref{eq:hom} and~\ref{eq:shom}
can be conceived of as matrix identities.
Let $\hom^*$ denote the restriction of $\hom$
to pairs in $\str^*[\sigma] \times \str^*[\sigma]$,
and view it as an infinite matrix whose indices are such pairs
and having entries in $\Q$;
define and view $\shom^*$, etc. analogously.
Then Equation~\ref{eq:hom}, in matrix notation, is expressed by
$$\hom^* = \shom^* \cdot \indsub^*.$$
Analogously, Equation~\ref{eq:shom}, in matrix notation,
is expressed by
$$\shom^* = \condens^* \cdot \deducts^*.$$

Suppose that, for the indexing,
the structures in $\str^*[\sigma]$ 
are ordered in a way that respects total size,
that is, whenever $\relb$ comes before $\relb'$,
it holds that $||\relb|| \leq ||\relb'||$.
Then, the matrices $\indsub^*$ and $\deducts^*$
are readily seen to be upper triangular and 
to have all diagonal entries equal to $1$;
it can be verified that they are invertible.
\end{remark}

\section{The space of template parameters}

We now study the space of linear combinations of functions
$\hom(\cdot, \relb)$.
Fix $\sigma$ to be a signature.
Define a \emph{template function} to be a function
$f: \str[\sigma] \to \Q$ such that
there exists a structure $\relb \in \str[\sigma]$
where, for each $\rela \in \str[\sigma]$,
it holds that $f(\rela) = \hom(\rela,\relb)$.
Define a \emph{template parameter}
to be a function
$f: \str[\sigma] \to \Q$ 
that can be expressed as a finite linear combination of
template functions.
Template parameters naturally form a vector space,
and this space is clearly spanned by the template functions.
We prove that the template functions
$(\hom(\cdot, \relb))_{\relb \in \str^*[\sigma]}$
are linearly independent, and hence form a basis for this vector space.

\begin{theorem}
\label{thm:independence}
Let $(\beta_1,\relb_1),\ldots,(\beta_n,\relb_n) \in \Q \times \str^*[\sigma]$
be such that the $\relb_i$ are pairwise distinct.
Suppose that, for all structures $\rela \in \str[\sigma]$,
it holds that $\sum_{i=1}^n \beta_i \hom(\rela,\relb_i) = 0$.
Then $\beta_1 = \cdots = \beta_n = 0$.
\end{theorem}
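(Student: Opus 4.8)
The plan is to prove linear independence by exhibiting, for any putative linear dependence, a test structure $\rela$ on which the combination fails to vanish. The standard tool here is to order the structures $\relb_i$ by size of universe (breaking ties by total size) and evaluate the combination at a carefully chosen $\rela$ so that exactly one term dominates. A clean choice is to take $\rela = \relb_j$ for the index $j$ minimizing $|B_j|$ (and among those, minimizing $||\relb_j||$, and among those picking, say, the first in the fixed order), and argue that $\hom(\relb_j,\relb_i)$ is forced to be $0$ for every $i\neq j$ in the list, while $\hom(\relb_j,\relb_j)\geq 1$.

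First I would set up the ordering. Reindex so that $|B_1| \le |B_2| \le \cdots \le |B_n|$, and among indices achieving the minimum universe size, I single out one, say $\relb_1$, that is also of minimum total size $||\relb_1||$ among that group. The key observation is then: for $i \neq 1$, there is no surjective homomorphism from $\relb_1$ to any structure with a strictly larger universe, so if $|B_i| > |B_1|$ then $\hom$ still could be nonzero via a non-surjective homomorphism landing in a proper induced substructure — so a raw size argument on $\hom$ is not quite enough. This is the main obstacle: $\hom(\relb_1, \relb_i)$ can be nonzero even when $\relb_i$ is large, as long as $\relb_i$ contains an induced copy of something $\relb_1$ maps onto. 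The fix is to not use $\hom$ directly as the triangular quantity, but rather to use one of the triangular decompositions already established in the excerpt.

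Concretely, I would invoke Equation~\ref{eq:hom}, which expresses $\hom(\cdot,\relb)$ in terms of $\shom(\cdot,\relb')$ and $\indsub(\relb',\relb)$, together with the second Remark: the matrix $\indsub^*$ is upper triangular with $1$'s on the diagonal when structures are ordered respecting total size, hence invertible. Therefore the hypothesized relation $\sum_i \beta_i \hom(\cdot,\relb_i)=0$ can be rewritten, by changing basis via $(\indsub^*)^{-1}$, as $\sum_i \gamma_i \shom(\cdot,\relb_i)=0$ for scalars $\gamma_i$ obtained by applying the (upper-triangular, unipotent) inverse matrix — crucially, the $\relb_i$ appearing are still among the original finite set plus possibly smaller ones, and the transformation is invertible so $(\gamma_i) \neq 0$ iff $(\beta_i)\neq 0$. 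Now $\shom(\relb', \relb_i) \neq 0$ forces $|B'| \ge |B_i|$ (indeed a surjection needs the source universe at least as large). Picking $j$ with $\relb_j$ of minimum universe size among indices with $\gamma_j \neq 0$, and evaluating at $\rela = \relb_j$, every surviving term $\shom(\relb_j,\relb_i)$ with $\gamma_i \neq 0$ and $i \neq j$ vanishes except possibly those with $|B_i| = |B_j|$; among those, $\shom(\relb_j, \relb_i)\neq 0$ would give a surjective hom, which for equal finite universe sizes is a bijective hom, i.e. an embedding-type map, and distinctness of $\relb_i$ as non-isomorphic structures together with a second tie-break on total size rules out all but $i=j$. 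Since $\shom(\relb_j,\relb_j) \ge 1$ (the identity), we get $\gamma_j = 0$, a contradiction.

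The remaining care is purely bookkeeping: verifying that the change of basis keeps the index set finite (it does — only finitely many $\indsub(\relb',\relb_i)$ are nonzero), that the unipotent triangular structure really gives $(\gamma_i)\neq 0 \iff (\beta_i) \neq 0$, and handling the equal-universe-size case cleanly. For that last case, an alternative that sidesteps the bijection discussion entirely is to iterate the same argument one level deeper using Equation~\ref{eq:shom} and the invertibility of $\deducts^*$, reducing $\shom$ to $\condens$; then $\condens(\relb_j, \relb_i) \neq 0$ with $|B_i| = |B_j|$ forces $||\relb_i|| \ge ||\relb_j||$, and the final tie-break by total size plus the first-in-order choice isolates $i = j$ exactly. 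I expect the equal-size reduction to be the only genuinely delicate point; everything else is the familiar triangularity-plus-evaluation template.
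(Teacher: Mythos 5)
Your route is genuinely different from the paper's. The paper proves independence via Lemma~\ref{lemma:lovasz-variant} (which rests on Lov\'asz's theorem that homomorphism counts determine structures up to isomorphism), partitions the $\relb_i$ into homomorphic-equivalence classes, and extracts the coefficients with a Vandermonde system built from disjoint unions $k\rela + \relb_1$. You instead push the dependence through the triangular decompositions (Equations~\ref{eq:hom} and~\ref{eq:shom}) and evaluate at an extremal structure; this is more elementary (no Lov\'asz, no Vandermonde, no disjoint unions) and is essentially the dual of the Curticapean--Dell--Marx argument. The bookkeeping you worry about is fine: substituting Equation~\ref{eq:hom} into $\sum_i \beta_i\hom(\cdot,\relb_i)=0$ directly yields $\sum_{\relb'}\gamma_{\relb'}\shom(\cdot,\relb')=0$ with support in the finite set of induced substructures of the $\relb_i$, and unipotent triangularity of $\indsub^*$ (and then of $\deducts^*$) guarantees the new coefficient vector is nonzero whenever $\beta\neq 0$.

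The genuine problem is that you have the key inequality backwards in both versions of the extremal step, and with your stated tie-break (minimize total size) neither version closes as written. A bijective homomorphism $h:\relb_j\to\relb_i$ satisfies $h(R^{\relb_j})\subseteq R^{\relb_i}$ with $h$ injective on tuples, so it forces $||\relb_j||\le ||\relb_i||$ --- the \emph{target} may have strictly more tuples (e.g.\ the edgeless two-vertex graph maps bijectively and homomorphically onto a single edge). Minimality of $||\relb_j||$ gives the same inequality, so nothing is ruled out; at the $\shom$ level you would need to \emph{maximize} total size among the minimal-universe-size candidates to force equality and hence isomorphism. Dually, a condensation from $\relb_j$ onto $\relb_i$ forces $||\relb_i||\le ||\relb_j||$, not $||\relb_i||\ge ||\relb_j||$ as you wrote. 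With the inequality corrected, your condensation-level argument does work, and more simply than you suggest: pick $\relb_j$ of minimal total size among the nonzero coefficients; any $\relb_i$ with $\condens(\relb_j,\relb_i)\neq 0$ has $|B_i|\le|B_j|$ and $|R^{\relb_i}|\le|R^{\relb_j}|$ for every $R$, hence $||\relb_i||\le||\relb_j||$, hence equality throughout, hence the condensation is an isomorphism and $i=j$; since $\condens(\relb_j,\relb_j)\ge 1$, the coefficient of $\relb_j$ vanishes, a contradiction. So the proof is salvageable, but exactly at the step you flagged as delicate, the argument you actually give does not go through.
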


We first establish a lemma.

\begin{lemma}
\label{lemma:lovasz-variant}
Suppose that $\relb_1,\ldots,\relb_k \in \str^*[\sigma]$
are pairwise distinct, but all homomorphically equivalent.
Then, there exists a structure $\rela_k$
such that the values $(\hom(\rela_k,\relb_i))_{i=1,\ldots,k}$
are non-zero and pairwise distinct.
\end{lemma}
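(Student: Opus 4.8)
The plan is to handle every pair of the $\relb_i$ by a single "separator" structure that simultaneously witnesses non-vanishing, and then to fuse these separators into one structure by disjoint union, choosing the number of copies of each so that the resulting homomorphism counts come out pairwise distinct.

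For the separators: since the $\relb_i$ lie in $\str^*[\sigma]$, being pairwise distinct means being pairwise non-isomorphic, so the classical theorem of Lov\'asz — that $\hom(\cdot,\relb)$ determines a finite structure $\relb$ up to isomorphism — provides, for each pair $i<j$, a structure $\relc_{ij}$ with $\hom(\relc_{ij},\relb_i)\neq\hom(\relc_{ij},\relb_j)$. The key observation, and the only place homomorphic equivalence is used, is that each $\relc_{ij}$ automatically admits a homomorphism to \emph{every} $\relb_l$: if $\relc_{ij}$ admitted no homomorphism to $\relb_i$, then by the homomorphic equivalence of $\relb_i$ and $\relb_j$ it would admit none to $\relb_j$ either, forcing $\hom(\relc_{ij},\relb_i)=\hom(\relc_{ij},\relb_j)=0$, contrary to the choice of $\relc_{ij}$; hence $\relc_{ij}\to\relb_i$, and composing with a homomorphism $\relb_i\to\relb_l$ gives $\relc_{ij}\to\relb_l$. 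In particular $\hom(\relc_{ij},\relb_l)\geq 1$ for all $l$.

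For the fusion: consider $\rela(\overline{m})$ obtained by taking, for each $i<j$, exactly $m_{ij}$ disjoint copies of $\relc_{ij}$ (with the $m_{ij}$ positive integers still to be fixed) and forming their disjoint union. Since a homomorphism from a disjoint union is the same as a tuple of homomorphisms from the parts, $\hom(\rela(\overline{m}),\relb_l)=\prod_{i<j}\hom(\relc_{ij},\relb_l)^{m_{ij}}$; by the previous paragraph every factor is a positive integer, so this product is non-zero no matter how $\overline{m}$ is chosen. Taking logarithms turns $\log\hom(\rela(\overline{m}),\relb_l)$ into the linear form $\langle\overline{m},v_l\rangle$ with $v_l=(\log\hom(\relc_{ij},\relb_l))_{i<j}$. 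For $l\neq l'$ the vectors $v_l$ and $v_{l'}$ differ, because they differ in the coordinate indexed by the pair $\{l,l'\}$ by the defining property of $\relc_{\{l,l'\}}$; hence $\{\overline{m}:\langle\overline{m},v_l-v_{l'}\rangle=0\}$ is a proper linear subspace of $\Q^{\binom{k}{2}}$. A finite union of proper subspaces cannot cover $\Q^{\binom{k}{2}}$ (equivalently, the product of the $\binom{k}{2}$ nonzero linear forms $\langle\cdot,v_l-v_{l'}\rangle$ is a nonzero polynomial, hence is non-vanishing at some point of $\mathbb{Z}_{\geq 1}^{\binom{k}{2}}$), so a suitable $\overline{m}\in\mathbb{Z}_{\geq 1}^{\binom{k}{2}}$ exists, and $\rela_k:=\rela(\overline{m})$ has the required properties.

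The step I expect to be the real obstacle is the non-vanishing clause: without it, the statement would follow essentially immediately from classical Lov\'asz plus the disjoint-union bookkeeping above. Everything hinges on the remark that, among homomorphically equivalent structures, any separator of two of them must map homomorphically into both — hence, by equivalence, into all of them — so non-vanishing is free for the separators and is inherited by disjoint unions. If one prefers not to invoke Lov\'asz's theorem as a black box, the separators can be produced directly in the usual way: recover the injective-homomorphism counts $\mathrm{Inj}(\relc,\relb_i)$ from $\hom(\cdot,\relb_i)$ by M\"obius inversion over the partition lattice of $\relc$ (noting that a quotient $\relc/\pi$ admitting no homomorphism to the common equivalence class contributes $0$ to both $\hom(\relc,\relb_i)$ and $\hom(\relc,\relb_j)$), and then, from $\mathrm{Inj}(\relb_i,\relb_i)=|\mathrm{Aut}(\relb_i)|>0$, read off embeddings in both directions between any two indistinguishable $\relb_i,\relb_j$, which for finite structures forces $\relb_i\cong\relb_j$.
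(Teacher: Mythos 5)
Your proof is correct, and it rests on the same two essential ingredients as the paper's: Lov\'asz's theorem supplies a separator for each pair of non-isomorphic $\relb_i$, and the homomorphic equivalence of the $\relb_i$ guarantees (by composing homomorphisms) that every such separator has strictly positive homomorphism count into each $\relb_l$ --- which is exactly why the non-vanishing clause comes for free and is inherited by disjoint unions. Where you genuinely diverge is in how the separators are fused. The paper argues by induction on $k$: having built $\rela_{k-1}$ separating $\relb_1,\ldots,\relb_{k-1}$, it adds a single Lov\'asz separator $\rela'$ for the one possible collision with $\relb_k$ and takes $M\rela_{k-1}+\rela'$ for $M$ large, verifying via explicit ratio inequalities that a large power of $\rela_{k-1}$ preserves the strict ordering already achieved. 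You instead take all $\binom{k}{2}$ separators simultaneously with multiplicities $\overline{m}$ and observe that each unwanted coincidence confines $\overline{m}$ to the zero set of a nonzero linear form in the exponents, and finitely many such zero sets cannot cover $\mathbb{Z}_{\geq 1}^{\binom{k}{2}}$. Your route is non-inductive and arguably cleaner; one pedantic point is that the vectors $v_l$ have irrational (logarithmic) entries, so ``proper linear subspace of $\Q^{\binom{k}{2}}$'' should really be read via your parenthetical polynomial formulation (a nonzero product of real-coefficient linear forms cannot vanish on an infinite integer grid), which settles the matter. The paper's inductive version buys the convenience of handling only one new separator per step and of staying entirely within integer inequalities. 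The closing sketch of how to reprove Lov\'asz's theorem is inessential to your argument and is also the one place where details are glossed over (injective homomorphisms in both directions between finite relational structures do force isomorphism, but this requires the finite-cardinality argument on each relation); since the paper itself cites Lov\'asz as a black box, this costs you nothing.
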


For two structures $\rela_1$, $\rela_2$, we use
$\rela_1 + \rela_2$ to denote their disjoint union; and,
for $N \geq 0$, we use $N \rela_1$ to denote the $N$-fold
disjoint union of $\rela_1$ with itself.
The identity
$\hom(\rela_1+\rela_2,\relb) = \hom(\rela_1,\relb)\cdot\hom(\rela_2,\relb)$
is known and straightforwardly verified.

\begin{proof}
We prove this by induction.  In the case that $k = 1$,
one can simply take $\rela_1 = \relb_1$.

Suppose that $k > 1$.
By induction, there exists $\rela_{k-1}$
such that 
$(\hom(\rela_{k-1},\relb_i))_{i=1,\ldots,k-1}$
are non-zero and pairwise distinct.
Let us assume for the sake of notation
that $\hom(\rela_{k-1},\relb_1) < \cdots < \hom(\rela_{k-1},\relb_{k-1})$.
Since the structures $\relb_i$ are homomorphically equivalent,
we have $\hom(\rela_{k-1},\relb_k) > 0$.
If $\hom(\rela_{k-1},\relb_k)$ 
is distinct from
each of the values $(\hom(\rela_{k-1},\relb_i))_{i=1,\ldots,k-1}$,
we are done.
Otherwise, there exists a
unique index $\ell \in \{ 1, \ldots, k-1 \}$ such that
$\hom(\rela_{k-1},\relb_k) = \hom(\rela_{k-1},\relb_\ell)$.
By Lovasz's theorem~\cite{Lovasz67-operations-structures}, there exists a structure $\rela'$
such that 
$\hom(\rela',\relb_k) \neq \hom(\rela',\relb_\ell)$;
observe that 
since $\relb_k$ and $\relb_\ell$ are homomorphically equivalent,
both of these values are non-zero;
indeed, all of the values $(\hom(\rela',\relb_i))_{i = 1,\ldots,k}$ 
are non-zero.

We claim that for all sufficiently large values $M$,
the structure $\rela_k = M \rela_{k-1} + \rela'$
has the desired property that
the values $(\hom(\rela_k,\relb_i))_{i=1,\ldots,k}$
are non-zero and pairwise distinct.
This is indeed straightforward to verify.
We have $\hom(M \rela_{k-1} + \rela',\relb_k)
= \hom(\rela_{k-1},\relb_k)^M \cdot \hom(\rela',\relb_k) > 0$,
and since the structures $\relb_i$ are homomorphically equivalent,
we obtain that
the values $(\hom(\rela_k,\relb_i))_{i=1,\ldots,k}$
are non-zero.
Let us now argue pairwise distinctness.
When $j$ is such that $1 \leq j < i$,
for sufficiently large values of $M$,
it will hold that
$\frac{\hom(\rela',\relb_j)}{\hom(\rela',\relb_{j+1})} <
(\frac{\hom(\rela_{k-1},\relb_{j+1})}{\hom(\rela_{k-1},\relb_j)})^M$,
from which it follows that
$ \hom(M \rela_{k-1} + \rela',\relb_j) 
< \hom(M \rela_{k-1} + \rela',\relb_{j+1}) $.
In a similar way, one sees 
that when $j \in \{ 1, \ldots, k-1 \} \setminus \{ \ell \}$,
for sufficiently large values of $M$,
it holds that
$\hom(M \rela_{k-1} + \rela',\relb_j) \neq
\hom(M \rela_{k-1} + \rela',\relb_k)$.
Finally, we have for all $M \geq 1$ that
$\hom(M \rela_{k-1} + \rela',\relb_\ell) \neq
\hom(M \rela_{k-1} + \rela',\relb_k)$,
as a consequence of 
$\hom(\rela_{k-1},\relb_k) = \hom(\rela_{k-1},\relb_\ell)$ and
$\hom(\rela',\relb_k) \neq \hom(\rela',\relb_\ell)$.
\end{proof}

\begin{proof} (Theorem~\ref{thm:independence})
We prove this by induction on $n$.  It is clear for $n = 1$,
so suppose that $n > 1$.

We assume for the sake of notation that $\relb_1$
is extremal in that for each other structure $\relb_j$,
either $\relb_1$ is homomorphically equivalent to $\relb_j$
or does not admit a homomorphism to $\relb_j$.
We assume further that $\relb_1, \ldots, \relb_m$
is a list of the structures among $\relb_1, \ldots, \relb_n$
that are homomorphically equivalent to $\relb_1$.

Applying Lemma~\ref{lemma:lovasz-variant}
to $\relb_1, \ldots, \relb_m$, 
we obtain a structure $\rela$ such that
the values $(\hom(\rela,\relb_i))_{i=1,\ldots,m}$
are pairwise distinct.
Consider the structures $(\rela_{k})_{k=0,\ldots,m-1}$
defined by $\rela_k = k \rela + \relb_1$.
For each $k \in \{ 0, \ldots, m-1 \}$,
we have
$\sum_{i=1}^n \beta_i \hom(\rela_k,\relb_i) = 0$,
which implies 
$\sum_{i=1}^n \beta_i \hom(\relb_1,\relb_i) \hom(\rela,\relb_i)^k = 0$,
which in turn implies 
$\sum_{i=1}^m \beta_i \hom(\relb_1,\relb_i) \hom(\rela,\relb_i)^k = 0$.
Now, form a system of equations by taking this last equation 
over $k \in \{ 0, \ldots, m-1 \}$;
view it as a system of equations over unknowns 
$y_i = \beta_i \hom(\relb_1,\relb_i)$, where $i$ ranges from $1$ to $m$.
The corresponding matrix is a Vandermonde matrix,
implying that $y_1 = \cdots = y_m = 0$.
Since the values $(\hom(\relb_1,\relb_i))_{i=1,\ldots,m}$
are all non-zero, we infer that
$\beta_1 = \cdots = \beta_m = 0$.
By applying induction, we obtain that
$\beta_{m+1} = \cdots = \beta_n = 0$.
\end{proof}

\section{The complexity of template parameters}

We now study the complexity of computing template parameters,
showing in essence that computing a template parameter 
has the same complexity as being able to compute
all of its constituent functions $\hom(\cdot,\relb)$.

\begin{theorem}
\label{thm:turing-equivalent}
Let $(\beta_1,\relb_1),\ldots,(\beta_n,\relb_n) \in \Q \times \str^*[\sigma]$
be such that the values $\beta_i$ are non-zero and such that 
the structures $\relb_i$ are pairwise non-isomorphic.
\begin{itemize}

\item 
Let $f: \str[\sigma] \to \Q$ be the function defined by
$f(\rela) = \sum_{i=1}^n \beta_i \cdot \hom(\rela,\relb_i)$.

\item
Let $g: \{ 1, \ldots, n \} \times \str[\sigma] \to \Q$
be the function defined by
$g(i,\rela) = \hom(\rela,\relb_i)$.

\end{itemize}
The functions $f$ and $g$ are equivalent under polynomial-time
Turing reduction.
\end{theorem}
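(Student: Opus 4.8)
The plan is to prove the two directions of the polynomial-time Turing equivalence separately. The direction from $g$ to $f$ is immediate: given an oracle for $g$, on input $\rela$ one queries $g(i,\rela) = \hom(\rela,\relb_i)$ for each $i \in \{1,\ldots,n\}$ and returns $\sum_{i=1}^n \beta_i \cdot g(i,\rela)$. Since $n$ and the $\beta_i$ depend only on the fixed list (not on $\rela$), this is a constant number of oracle calls followed by an arithmetic computation of size polynomial in the output, so it is a polynomial-time Turing reduction. The substance of the theorem is the reverse direction: computing each individual $\hom(\cdot,\relb_i)$ given oracle access to the single linear combination $f$.

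For the hard direction, the plan is to use the same ``tensoring with disjoint unions plus Vandermonde interpolation'' trick that underlies the proof of Theorem~\ref{thm:independence}, but now algorithmically. First I would reduce to the case where all the $\relb_i$ are homomorphically equivalent to one another, by a peeling argument: using the homomorphism preorder on $\{\relb_1,\ldots,\relb_n\}$, one can isolate a homomorphically-equivalent ``top class''. More carefully, I would process the structures in an order refining the homomorphism preorder; when querying $f$ on a structure $\rela$ built only from a homomorphically-equivalent block $\relb_{j_1},\ldots,\relb_{j_m}$ together with structures in lower classes, the contributions of structures in strictly higher classes vanish (no homomorphism exists), and the contributions of structures incomparable to the block can be controlled by the same extremality choice used in Theorem~\ref{thm:independence}. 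Within a homomorphically-equivalent block, I would invoke Lemma~\ref{lemma:lovasz-variant} to obtain a structure $\rela^*$ with $(\hom(\rela^*,\relb_{j_t}))_t$ non-zero and pairwise distinct, then query $f$ on the structures $k\rela^* + \relb_{j_1}$ for $k = 0,\ldots,m-1$, using the multiplicativity identity $\hom(\rela_1+\rela_2,\relb)=\hom(\rela_1,\relb)\cdot\hom(\rela_2,\relb)$ to turn these $m$ values into a Vandermonde system in the unknowns $\hom(\relb_{j_1},\relb_{j_t})\cdot\hom(\rela^*,\relb_{j_t})^{\,0}$-scaled quantities $\hom(\cdot,\relb_{j_t})$; inverting recovers each $\hom(\cdot,\relb_{j_t})$. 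Subtracting the now-known higher-and-sibling contributions from $f$ leaves a smaller linear combination, and one recurses.

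The main obstacle I anticipate is bookkeeping the reduction so that it stays genuinely polynomial time and produces outputs of polynomial size. The auxiliary structures $k\rela^* + \relb_{j_1}$ and (in Lemma~\ref{lemma:lovasz-variant}) the structures of the form $M\rela_{k-1}+\rela'$ involve a parameter $M$ that is only asserted to be ``sufficiently large''; one must check that an adequate $M$ of at most polynomial bit-length suffices, so that these structures — and hence the oracle queries and the resulting integers — have size polynomial in $\|\rela\|$. I would address this by noting that all relevant $\hom$ values are bounded by $|B_i|^{|A|}$, giving explicit exponential-in-$\|\rela\|$ bounds on the quantities being separated, so that choosing $M$ with $O(\mathrm{poly}(\|\rela\|))$ bits makes all the needed strict inequalities hold; crucially $M$ can be computed from the fixed data and $\|\rela\|$ alone. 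A secondary point is that the number of recursion levels and the number of structures per level are bounded by the constant $n$, so only $O(1)$ oracle calls and $O(1)$-size linear-algebra solves (over $\Q$, with the Vandermonde nodes being the fixed-or-polynomially-bounded integers $\hom(\rela^*,\relb_{j_t})$) are performed, keeping the whole procedure polynomial time. Once these size bounds are in place, correctness follows directly from the identities already established and from Lemma~\ref{lemma:lovasz-variant}.
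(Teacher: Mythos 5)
Your proposal is correct and follows essentially the same route as the paper's proof: peel off an extremal homomorphic-equivalence class, apply Lemma~\ref{lemma:lovasz-variant} to that class, query $f$ on structures of the form (class representative) $+$ (input) $+$ $k\cdot$(separating structure) to obtain a Vandermonde system, solve for the block's values, subtract their contribution, and recurse on the rest. Two small remarks: the queries must include the input structure $\rela$ as a disjoint summand (your text writes $k\rela^*+\relb_{j_1}$, omitting $\rela$, without which the system only yields input-independent constants), and your concern about bounding the ``sufficiently large'' $M$ is moot, since Lemma~\ref{lemma:lovasz-variant} is applied only to the fixed templates $\relb_{j_1},\ldots,\relb_{j_m}$, so $\rela^*$ is a constant-size structure independent of the input.
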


\newcommand{\tred}{\leq_T^p}
For functions $h, h'$,
we use $h \tred h'$ to indicate that $h$ polynomial-time Turing 
reduces to $h'$.

\begin{proof}
It is clear that $f \tred g$, so we prove that $g \tred f$,
by induction on $n$; the result is clear for $n=1$.
By rearranging indices if necessary, let us assume that
the structures $\relb_1, \ldots, \relb_m$ are as described
in the second paragraph of the proof of Theorem~\ref{thm:independence}.
Let $g_1$ be the restriction of $g$ to
$ \{ 1, \ldots, m \} \times \str[\sigma]$,
and
let $g_2$ be the restriction of $g$ to
$ \{ m+1, \ldots, n \} \times \str[\sigma]$.
Let $f_2: \str[\sigma] \to \Q$ be the function defined by
$f(\rela) = \sum_{i=m+1}^{n} \beta_i \cdot \hom(\rela,\relb_i)$.

Let us show $g_1 \tred f$.
By applying Lemma~\ref{lemma:lovasz-variant}
to $\relb_1, \ldots, \relb_m$,
we obtain a structure $\rela'$ such that
the values $(\hom(\rela',\relb_i))_{i=1,\ldots,m}$
are pairwise distinct.
Given a pair $(j,\rela)$ as input,
the reduction constructs
the structures $(\rela_k)_{k=0,\ldots,m-1}$
defined by $\rela_k = \relb_1 + \rela + k \rela'$,
and then computes the various values $f(\rela_k)$.
We have, for each $k$,
$\sum_{i=1}^n \beta_i \hom(\rela_k,\relb_i) = f(\rela_k)$;
from this, we obtain
$\sum_{i=1}^m \beta_i \hom(\relb_1,\relb_i) \hom(\rela,\relb_i) \hom(\rela',\relb_i)^k = f(\rela_k)$.
Viewing this as a system of equations over unknowns
$y_i = \beta_i \hom(\relb_1,\relb_i) \hom(\rela,\relb_i)$,
the corresponding matrix is Vandermonde.
Hence, we may solve for these unknowns $y_i$,
and then from their solution compute the values $\hom(\rela,\relb_i)$.
We then output the desired value $\hom(\rela,\relb_j)$.

We now argue that $f_2 \tred f$.
Given a structure $\rela$ as input,
the reduction first computes $f(\rela)$.  
Since we just showed that $g_1 \tred f$,
the reduction may also compute the values 
$\hom(\rela,\relb_1),\ldots,\hom(\rela,\relb_m)$.
By subtracting 
$\beta_1 \hom(\rela,\relb_1) + \cdots + \beta_m \hom(\rela,\relb_m)$
from $f(\rela)$, the desired value $f_2(\rela)$ is computed.

We obtain $g_2 \tred f_2$ by induction; it follows that $g_2 \tred f$.

As we established that $g_1 \tred f$ and $g_2 \tred f$,
it is immediate that $g \tred f$.
\end{proof}

\section{Complexity results}
\newcommand{\fp}{\mathsf{FP}}
\newcommand{\sharpp}{\mathsf{\sharp P}}

Previous work established a complexity dichotomy 
for the family of problems $\chom(\relb)$.
Let $\fp$ denote the functional version of polynomial time.
A criterion was presented that distinguishes the structures $\relb$
for which $\chom(\relb)$ is in the class $\fp$,
from those that are complete for $\sharpp$.  
Here, we refer to this criterion
as the \emph{$\chom(\cdot)$-tractability condition};
we refer the reader to~\cite{DyerRicherby13-effective-dichotomy-counting} for a precise formulation of this criterion.
The dichotomy can be made precise as follows.

\begin{theorem} \cite{Bulatov13-counting,DyerRicherby13-effective-dichotomy-counting}
\label{thm:chom-dichotomy}
Let $\relb$ be any structure.
If $\relb$ satisfies the $\chom(\cdot)$-tractability condition,
then the problem $\chom(\relb)$ is in $\fp$;
otherwise, it is $\sharpp$-complete under polynomial-time Turing reducibility.
\end{theorem}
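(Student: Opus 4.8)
This statement is not established here: it is a deep theorem, proved independently by Bulatov~\cite{Bulatov13-counting} and by Dyer and Richerby~\cite{DyerRicherby13-effective-dichotomy-counting}, and the present article invokes it as a black box. What follows is therefore a description of the architecture of those proofs rather than a self-contained argument.

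For the tractability direction, the plan is to give a polynomial-time algorithm computing $\hom(\rela,\relb)$ under the hypothesis that $\relb$ satisfies the $\chom(\cdot)$-tractability condition. One processes the input structure $\rela$ incrementally, maintaining throughout a succinct data structure that represents the set of partial homomorphisms discovered so far \emph{together with counting information} --- roughly, for each partial solution in a small representative set, the number of ways it extends. The tractability condition (which, in the Dyer--Richerby formulation, says that $\relb$ has a Mal'tsev polymorphism and that the relations pp-definable over $\relb$ are ``balanced'', so that the relevant congruences behave rectangularly) is exactly what guarantees that this representation remains of polynomial size and can be updated in polynomial time; the algorithm is a counting-aware refinement of the Bulatov--Dalmau Mal'tsev solving procedure, in the spirit of Gaussian elimination. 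I expect the correctness-and-size analysis to be the principal obstacle: Bulatov's route proceeds via a detailed study of the congruence lattices arising locally in powers of $\relb$, while Dyer and Richerby replace that with an intricate but more elementary combinatorial case analysis organized around rectangularity and balance (and, as a byproduct, obtain that the criterion is decidable).

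For the hardness direction, the plan is the standard interpolation-plus-gadgets strategy. A failure of the $\chom(\cdot)$-tractability condition is witnessed by a concrete structural defect in $\relb$ or in one of its pp-powers --- typically a pp-definable binary relation that is not rectangular, equivalently a ``bad'' congruence --- and from such a defect one constructs, via pp-definitions together with the interpolation technique of forming disjoint unions and powers of the input instance and then recovering individual coefficients by inverting a Vandermonde system (exactly as in the proof of Lemma~\ref{lemma:lovasz-variant} above), input structures that encode a problem already known to be $\sharpp$-hard, such as counting independent sets in graphs. The substantive point, and again the crux handled in the cited works, is the dichotomy claim itself: that every $\relb$ lying outside the tractability condition does admit such a hardness gadget, so that no structure escapes both regimes.
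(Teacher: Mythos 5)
The paper itself offers no proof of this theorem --- it is imported verbatim from Bulatov and from Dyer and Richerby and used as a black box --- so your decision to treat it as such is exactly right, and your sketch of the two cited proof architectures (Mal'tsev plus balance for tractability, non-rectangularity gadgets plus interpolation for hardness) is an accurate description of those works. Nothing further is required here.
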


The following was also established.

\begin{theorem} \cite{DyerRicherby13-effective-dichotomy-counting}
\label{thm:chom-meta-problem}
The $\chom(\cdot)$-tractability condition is decidable.
\end{theorem}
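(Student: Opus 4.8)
The plan is to show that the \emph{a priori} infinitary $\chom(\cdot)$-tractability condition is equivalent to a property of the finite structure $\relb$ that can be decided by a bounded search. Recalling the Dyer--Richerby formulation, the condition on $\relb$ factors into two requirements: \emph{strong rectangularity}, asserting that every relation pp-definable over $\relb$ is rectangular (closed under the ``exchange'' rule --- if $(a,b),(a,b'),(a',b)$ all belong, then so does $(a',b')$, for any splitting of the coordinates into two blocks), and a \emph{balance} requirement, a combinatorial condition on the congruences (the ``frames'') of the pp-definable relations over $\relb$. Each requirement, read literally, ranges over the infinite family of pp-definable relations, so the task in each case is to bound the size of a witness to failure.

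First I would dispatch strong rectangularity. The key fact is that $\relb$ is strongly rectangular if and only if $\relb$ has a Mal'cev polymorphism. The easy direction is the standard preservation argument: a Mal'cev operation $m \in \pol(\relb)$ is also a polymorphism of every relation pp-definable over $\relb$, and any relation invariant under a Mal'cev operation is rectangular, since applying $m$ coordinatewise to the three ``corners'' of a putative rectangle produces the fourth. The converse extracts a ternary Mal'cev polymorphism from the rectangularity witnesses via the $\pol$/$\inv$ Galois correspondence. Given this equivalence, decidability is immediate: $B$ is finite, so there are only finitely many operations $B^3 \to B$, and for each it is a finite check whether it is a polymorphism of $\relb$ and whether it satisfies $m(x,y,y) = m(y,y,x) = x$.

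Next I would treat the balance requirement under the assumption that strong rectangularity already holds, so that $\relb$ --- and, crucially, all of its pp-powers --- carry a Mal'cev polymorphism. Here the decisive ingredient is the structural theory of Mal'cev algebras: by the Bulatov--Dalmau theory of compact representations, every relation pp-definable over $\relb$ possesses a generating set whose size is bounded by a polynomial in its arity, and such generating sets --- hence the associated frames and congruences, which are themselves pp-definable and thus also compactly represented --- can be computed effectively. This is exactly what tames the infinite family of pp-definable relations: one shows that if the balance condition fails, it already fails on a pp-definable relation of arity at most an explicit function of $|B|$. The decision procedure is then to enumerate all pp-definitions over $\relb$ up to that arity bound --- finitely many up to logical equivalence --- build their compact representations, and verify balance on each.

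The hard part will be the balance step, specifically establishing the \emph{a priori} arity bound beyond which no new failure of balance can appear. Strong rectangularity reduces cleanly to the existence of a single ternary operation, but balance concerns the congruence lattices of arbitrarily high pp-powers of $\relb$, and controlling those requires the full compact-representation machinery for Mal'cev algebras together with the observation that the relevant congruences inherit small frames. Once that uniform bound is in hand, the rest --- preservation of polymorphisms under pp-definitions, rectangularity of Mal'cev-invariant relations, and the finiteness of the relevant search spaces --- is routine bookkeeping.
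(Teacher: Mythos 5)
First, a point of orientation: the paper does not prove this statement at all. Theorem~\ref{thm:chom-meta-problem} is imported verbatim from Dyer and Richerby, and the paper even declines to state the $\chom(\cdot)$-tractability condition precisely, explicitly deferring its formulation to that reference. So there is no internal proof to compare against; what you are reconstructing is the Dyer--Richerby decidability argument itself. Your decomposition does match the shape of that argument: the criterion factors into strong rectangularity plus a balance condition, strong rectangularity is equivalent to the existence of a Mal'cev polymorphism and is therefore a finite check over the finitely many ternary operations on $B$, and balance is the part that genuinely quantifies over the infinite family of pp-definable relations.

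That said, judged as a proof rather than a roadmap, the proposal has a genuine gap, and it sits exactly where you say the hard part is. The claim that a failure of balance must already occur at arity bounded by an explicit function of $|B|$ \emph{is} the mathematical content of the decidability theorem; invoking ``the compact-representation machinery together with the observation that the relevant congruences inherit small frames'' names a toolbox but supplies no argument, and nothing in the sketch explains why high-arity pp-powers cannot exhibit balance failures invisible at low arity. Until that uniform bound is actually established, the decision procedure you describe does not exist. A secondary imprecision: one cannot ``enumerate all pp-definitions over $\relb$ up to that arity bound,'' since even at fixed arity there are infinitely many pp-formulas (with unboundedly many quantified variables). What is finite is the set of pp-\emph{definable relations} of arity at most $r$ over a finite universe; one should compute this fragment of the relational clone via the $\pol$/$\inv$ correspondence (an $r$-ary relation is pp-definable iff it is preserved by all polymorphisms of $\relb$ of arity at most $|B|^r$, a finite and effective test). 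That step is fixable; the missing arity bound for balance is not fixable without importing the substantial technical core of the Dyer--Richerby proof.
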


Define the \emph{$\cshom(\cdot)$-tractability condition} to
be satisfied by a structure $\relb$ iff the algorithm of
Proposition~\ref{prop:shom} returns a list
$(\beta_1,\relb_1),\ldots,(\beta_k,\relb_k)$
such that each structure $\relb_i$ satisfies the
$\chom(\cdot)$-tractability condition.
(We remark here that all algorithms behaving as described
in Proposition~\ref{prop:shom} will output the same list,
up to permutation, due to Theorem~\ref{thm:independence}.)
We obtain the following.

\begin{theorem}
\label{thm:cshom-dichotomy}
Let $\relb$ be any structure.
If $\relb$ satisfies the $\cshom(\cdot)$-tractability condition,
then the problem $\cshom(\relb)$ is in $\fp$;
otherwise, it is $\sharpp$-complete under polynomial-time Turing 
reducibility.
Moreover, the $\cshom(\cdot)$-tractability condition is decidable.
\end{theorem}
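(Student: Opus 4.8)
The plan is to reduce everything to the machinery already assembled: Proposition~\ref{prop:shom} expresses $\shom(\cdot,\relb)$ as a linear combination of homomorphism-count functions, Theorem~\ref{thm:turing-equivalent} shows that such a linear combination is exactly as hard as the joint problem of computing its constituents, and Theorems~\ref{thm:chom-dichotomy} and~\ref{thm:chom-meta-problem} classify those constituents. Concretely, first I would observe that $\cshom(\relb)$ is by definition the problem of computing the function $\rela \mapsto \shom(\rela,\relb)$; applying the algorithm of Proposition~\ref{prop:shom} to $\relb$ yields a list $(\beta_1,\relb_1),\dots,(\beta_k,\relb_k)$ with the $\beta_i$ non-zero and the $\relb_i$ pairwise non-isomorphic, and $\shom(\rela,\relb) = \sum_{i=1}^k \beta_i \hom(\rela,\relb_i)$ for all $\rela$. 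This list meets the hypotheses of Theorem~\ref{thm:turing-equivalent}, so $\cshom(\relb)$ is polynomial-time Turing equivalent to the function $g(i,\rela) = \hom(\rela,\relb_i)$, $i \in \{1,\dots,k\}$, and computing $g$ is in turn interreducible, via polynomial-time Turing reductions, with the family of problems $\chom(\relb_1),\dots,\chom(\relb_k)$ taken together.

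From here the dichotomy splits into the two expected cases. If $\relb$ satisfies the $\cshom(\cdot)$-tractability condition, then by definition every $\relb_i$ satisfies the $\chom(\cdot)$-tractability condition, so each $\chom(\relb_i)$ lies in $\fp$ by Theorem~\ref{thm:chom-dichotomy}; as there are only finitely many indices, $g \in \fp$, and hence $\cshom(\relb) \in \fp$. If instead $\relb$ does not satisfy the condition, fix an index $j$ with $\relb_j$ violating the $\chom(\cdot)$-tractability condition. Then $\chom(\relb_j)$ is $\sharpp$-hard under polynomial-time Turing reducibility (Theorem~\ref{thm:chom-dichotomy}); since $\chom(\relb_j) \tred g \tred \cshom(\relb)$, the problem $\cshom(\relb)$ is $\sharpp$-hard. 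It is also a member of $\sharpp$: since $\relb$ is fixed, a map $h: A \to B$ is encodable by a string of length $O(|A|)$, and checking whether such a string encodes a surjective homomorphism from a given $\rela$ to $\relb$ takes polynomial time. Hence $\cshom(\relb)$ is $\sharpp$-complete.

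For the ``moreover'' clause, I would note that the decision procedure is: on input $\relb$, run the algorithm of Proposition~\ref{prop:shom} to obtain the list $(\beta_i,\relb_i)$, and for each $\relb_i$ decide membership in the $\chom(\cdot)$-tractability condition using Theorem~\ref{thm:chom-meta-problem}; accept iff all $\relb_i$ pass. Theorem~\ref{thm:independence} guarantees that the list produced is independent of which Proposition~\ref{prop:shom}-style algorithm is used (up to reordering), so the predicate being decided is well defined.

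The arguments above are all short once the interpolation work of Theorem~\ref{thm:turing-equivalent} is in hand, so I do not anticipate a genuine obstacle; the only points requiring a little care are the $\sharpp$-membership bookkeeping (that fixing the template keeps the number of candidate homomorphisms polynomially bounded in the input description length) and the routine verification that the reduction chain $\chom(\relb_j) \tred g \tred \cshom(\relb)$ composes into a single polynomial-time Turing reduction.
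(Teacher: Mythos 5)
Your proposal is correct and follows essentially the same route as the paper: apply Proposition~\ref{prop:shom} to get the linear combination, use Theorem~\ref{thm:turing-equivalent} to transfer tractability in one direction and $\sharpp$-hardness in the other via Theorem~\ref{thm:chom-dichotomy}, and derive decidability from Theorem~\ref{thm:chom-meta-problem}. Your explicit check of $\sharpp$-membership and the well-definedness remark via Theorem~\ref{thm:independence} are small additions the paper handles implicitly (the latter in a parenthetical remark just before the theorem).
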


\begin{proof}
Let
$(\beta_1,\relb_1),\ldots,(\beta_k,\relb_k)$.
be the list obtained by invoking the algorithm of
Proposition~\ref{prop:shom} on $\relb$.

Suppose $\relb$ satisfies the $\cshom(\cdot)$-tractability condition.
Let us argue that $\cshom(\relb)$ is in $\fp$.
The algorithm is given a structure $\rela$ as input.
By assumption, each $\relb_i$ satisfies the 
$\chom(\cdot)$-tractability condition,
and so each of the values $\hom(\rela,\relb_i)$ can be computed
in polynomial time.
The algorithm outputs the sum 
$\beta_1 \cdot \hom(\rela,\relb_1)
+ \cdots + 
\beta_k \cdot \hom(\rela,\relb_k).$

Suppose that $\relb$ does not satisfy 
the $\cshom(\cdot)$-tractability condition.
There exists an index $\ell$ such that $\relb_\ell$
does not satisfy 
the $\chom(\cdot)$-tractability condition,
so $\chom(\relb_\ell)$ is $\sharpp$-complete by
Theorem~\ref{thm:chom-dichotomy}.
Let $f$ and $g$ be the functions described in the statement of
Theorem~\ref{thm:turing-equivalent}.
Clearly,  $\chom(\relb_\ell) \tred g$.
Since $g \tred f$ by
Theorem~\ref{thm:turing-equivalent},
we obtain that $f$ is $\sharpp$-complete, as desired.

Decidability of
the $\cshom(\cdot)$-tractability condition 
is immediate from its definition and Theorem~\ref{thm:chom-meta-problem}.
\end{proof}

Define the \emph{$\ccondens(\cdot)$-tractability condition} to
be satisfied by a structure $\relb$ iff the algorithm of
Proposition~\ref{prop:condens} returns a list
$(\beta_1,\relb_1),\ldots,(\beta_k,\relb_k)$
such that each structure $\relb_i$ satisfies the
$\chom(\cdot)$-tractability condition.
We have the following; the proof is analogous to that of
Theorem~\ref{thm:cshom-dichotomy}.

\begin{theorem}
Let $\relb$ be any structure.
If $\relb$ satisfies the $\ccondens(\cdot)$-tractability condition,
then the problem $\ccondens(\relb)$ is in $\fp$;
otherwise, it is $\sharpp$-complete under polynomial-time Turing 
reducibility.
Moreover, the $\ccondens(\cdot)$-tractability condition is decidable.
\end{theorem}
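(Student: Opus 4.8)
The plan is to transcribe the proof of Theorem~\ref{thm:cshom-dichotomy}, replacing each appeal to Proposition~\ref{prop:shom} with an appeal to Proposition~\ref{prop:condens}. I would begin by running the algorithm of Proposition~\ref{prop:condens} on the input structure $\relb$, obtaining a list $(\beta_1,\relb_1),\ldots,(\beta_k,\relb_k) \in \Q \times \str[\sigma]$ in which the $\beta_i$ are non-zero, the $\relb_i$ are pairwise non-isomorphic, and $\condens(\rela,\relb) = \beta_1 \hom(\rela,\relb_1) + \cdots + \beta_k \hom(\rela,\relb_k)$ for every structure $\rela$. As in the $\cshom$ case, Theorem~\ref{thm:independence} ensures that this list is unique up to permutation, so the $\ccondens(\cdot)$-tractability condition is well-defined.

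For the tractable direction, suppose $\relb$ satisfies the $\ccondens(\cdot)$-tractability condition, so that every $\relb_i$ in the list satisfies the $\chom(\cdot)$-tractability condition. Then by Theorem~\ref{thm:chom-dichotomy} each value $\hom(\rela,\relb_i)$ is computable in polynomial time from $\rela$, and the algorithm on input $\rela$ computes these values and returns their $\beta_i$-weighted sum, which by the displayed identity equals $\condens(\rela,\relb)$; hence $\ccondens(\relb) \in \fp$.

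For the hard direction, suppose $\relb$ fails the condition, so that some $\relb_\ell$ in the list fails the $\chom(\cdot)$-tractability condition, whence $\chom(\relb_\ell)$ is $\sharpp$-complete by Theorem~\ref{thm:chom-dichotomy}. I would then apply Theorem~\ref{thm:turing-equivalent} to the list $(\beta_1,\relb_1),\ldots,(\beta_k,\relb_k)$ --- its hypotheses are met, since the $\beta_i$ are non-zero and the $\relb_i$ pairwise non-isomorphic --- taking $f = \condens(\cdot,\relb)$ and $g(i,\rela) = \hom(\rela,\relb_i)$. Then $\chom(\relb_\ell) \tred g$ trivially, and $g \tred f$ by Theorem~\ref{thm:turing-equivalent}, so $f$ is $\sharpp$-hard; combined with the easy observation that $\ccondens(\relb)$ lies in $\sharpp$ --- a candidate condensation can be guessed and verified in polynomial time, since checking that a map is a surjective, edge-surjective homomorphism is polynomial --- this yields $\sharpp$-completeness. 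Decidability of the $\ccondens(\cdot)$-tractability condition then follows immediately from its definition together with Theorem~\ref{thm:chom-meta-problem}.

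Since the argument is a faithful analogue of the proof of Theorem~\ref{thm:cshom-dichotomy}, I do not anticipate a genuine obstacle; the one step meriting explicit verification is that the output of Proposition~\ref{prop:condens} meets the hypotheses of Theorem~\ref{thm:turing-equivalent} (non-zero coefficients, pairwise non-isomorphic templates), as that is precisely what licenses the reduction $g \tred f$ used in the hardness argument --- and this is guaranteed by the statement of Proposition~\ref{prop:condens} itself.
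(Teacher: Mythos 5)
Your proposal is correct and is exactly the argument the paper intends: the paper explicitly says the proof "is analogous to that of Theorem~\ref{thm:cshom-dichotomy}", and your transcription with Proposition~\ref{prop:condens} in place of Proposition~\ref{prop:shom} is that analogue. The only addition beyond the paper's proof is your explicit remark on membership of $\ccondens(\relb)$ in $\sharpp$, which is harmless.
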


We would like to present further consequences of our theory.
From Equation~\ref{eq:shom-new}, it can be elementarily verified
that, for any structure $\relb$, the expression 
of $\shom(\cdot,\relb)$ as a linear combination of
functions $\hom(\cdot,\relb')$ gives a coefficient of $1$
to $\hom(\cdot,\relb)$.
The same fact holds for $\condens(\cdot,\relb)$ in place of
$\shom(\cdot,\relb)$, as can be elementarily seen
from Equations~\ref{eq:condens-new} and~\ref{eq:shom-new}.
(That $\hom(\cdot,\relb)$ receives a coefficient of $1$
is these expressions also immediate from M\"obius inversion.)
We thus obtain the following, via Theorem~\ref{thm:turing-equivalent}.

\begin{corollary}
For each structure $\relb$,
the problem $\chom(\relb)$ reduces to $\cshom(\relb)$.
\end{corollary}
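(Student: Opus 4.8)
The plan is to combine two ingredients already in hand: the observation recorded just before the corollary that, when $\shom(\cdot,\relb)$ is written as a linear combination of functions $\hom(\cdot,\relb')$, the function $\hom(\cdot,\relb)$ itself gets coefficient $1$; and the Turing-equivalence furnished by Theorem~\ref{thm:turing-equivalent}.

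First I would run the algorithm of Proposition~\ref{prop:shom} on $\relb$ to obtain a list $(\beta_1,\relb_1),\ldots,(\beta_k,\relb_k)$ with each $\beta_i$ non-zero, the $\relb_i$ pairwise non-isomorphic, and $\shom(\rela,\relb) = \sum_{i=1}^{k} \beta_i \hom(\rela,\relb_i)$ for every $\rela$. Replacing each $\relb_i$ by its isomorphic representative in $\str^*[\sigma]$ leaves all the values $\hom(\rela,\relb_i)$ unchanged, so I may assume $\relb_i \in \str^*[\sigma]$ for each $i$; being pairwise non-isomorphic, the $\relb_i$ are then pairwise distinct, as required by Theorem~\ref{thm:turing-equivalent}. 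Next I would invoke the coefficient-$1$ fact (the noted consequence of Equation~\ref{eq:shom-new}): since $\shom(\relb,\relb) \geq 1$ because the identity map is a surjective homomorphism, $\shom(\cdot,\relb)$ is not identically zero, and the coefficient of $\hom(\cdot,\relb)$ in the above linear combination is $1$; hence the representative of $\relb$ occurs in the list, say as $\relb_{i_0}$, with $\beta_{i_0} = 1$.

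Then I would apply Theorem~\ref{thm:turing-equivalent} to this list. With $f(\rela) = \sum_{i=1}^{k} \beta_i \hom(\rela,\relb_i)$ — which is precisely $\shom(\cdot,\relb)$, the function computed in the problem $\cshom(\relb)$ — and $g(i,\rela) = \hom(\rela,\relb_i)$, the theorem yields $g \tred f$. Finally, $\chom(\relb)$ asks, given $\rela$, for $\hom(\rela,\relb) = \hom(\rela,\relb_{i_0}) = g(i_0,\rela)$, which trivially reduces to $g$; chaining the reductions gives $\chom(\relb) \tred g \tred f = \cshom(\relb)$, as desired.

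There is no substantive obstacle here, since the real work is carried out in Theorem~\ref{thm:turing-equivalent}; the only points requiring a little care are the bookkeeping that puts the list of Proposition~\ref{prop:shom} into the exact form demanded by Theorem~\ref{thm:turing-equivalent}, and the verification that $\relb$ genuinely appears among the $\relb_i$. The latter is exactly where the coefficient-$1$ observation is indispensable: without it one could not be sure that $\hom(\cdot,\relb)$ is one of the constituent functions of $f$, and the conclusion would fail.
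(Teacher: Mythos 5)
Your proposal is correct and follows essentially the same route as the paper: it combines the coefficient-$1$ observation derived from Equation~\ref{eq:shom-new} with the reduction $g \tred f$ of Theorem~\ref{thm:turing-equivalent}. The extra bookkeeping you supply (normalizing the $\relb_i$ to representatives in $\str^*[\sigma]$ and confirming that $\relb$ appears in the list) is exactly the verification the paper leaves implicit.
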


\begin{corollary}
For each structure $\relb$,
the problem $\chom(\relb)$ reduces to $\ccondens(\relb)$.
\end{corollary}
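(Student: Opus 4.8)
The plan is to follow the same route used for the corollary concerning $\cshom(\relb)$. First I would invoke Proposition~\ref{prop:condens} to fix a list $(\beta_1,\relb_1),\ldots,(\beta_k,\relb_k) \in \Q \times \str[\sigma]$ with the $\beta_i$ nonzero and the $\relb_i$ pairwise non-isomorphic such that $\condens(\rela,\relb) = \sum_{i=1}^k \beta_i \hom(\rela,\relb_i)$ for every structure $\rela$. Since $\hom(\rela,\relb')$ depends only on the isomorphism type of $\relb'$, I may replace each $\relb_i$ by its representative in $\str^*[\sigma]$ without affecting any of these values, and thus assume that each $\relb_i$ lies in $\str^*[\sigma]$.

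Next I would use the fact recalled in the paragraph preceding the corollary: by combining Equations~\ref{eq:condens-new} and~\ref{eq:shom-new}, the coefficient assigned to $\hom(\cdot,\relb)$ in the expression of $\condens(\cdot,\relb)$ as a linear combination of functions $\hom(\cdot,\relb')$ is $1$. In particular, $\relb$ is isomorphic to one of the structures $\relb_i$; after reindexing I assume $\relb_1$ is the representative of $\relb$ in $\str^*[\sigma]$, so that $\hom(\cdot,\relb) = \hom(\cdot,\relb_1)$ and $\beta_1 = 1$ (only nonvanishing of this coefficient is actually needed, and coefficient $1$ supplies it).

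Now I would apply Theorem~\ref{thm:turing-equivalent} to the list $(\beta_1,\relb_1),\ldots,(\beta_k,\relb_k)$. With $f$ and $g$ as in its statement, we have $f = \condens(\cdot,\relb)$ — which is exactly the function computed by $\ccondens(\relb)$ — and $g(i,\rela) = \hom(\rela,\relb_i)$. The theorem yields $g \tred f$. Chaining the trivial reduction $\hom(\cdot,\relb) = \hom(\cdot,\relb_1) = g(1,\cdot) \tred g$ with $g \tred f$ gives $\chom(\relb) = \hom(\cdot,\relb) \tred \condens(\cdot,\relb) = \ccondens(\relb)$, which is the claim.

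There is no substantial obstacle here: the argument is a direct assembly of Proposition~\ref{prop:condens}, the coefficient-$1$ observation, and Theorem~\ref{thm:turing-equivalent}. The only point requiring a moment's care is the passage from the structures $\relb_i \in \str[\sigma]$ delivered by Proposition~\ref{prop:condens} to their isomorphic representatives in $\str^*[\sigma]$, so that the hypotheses of Theorem~\ref{thm:turing-equivalent} are met verbatim; this step is harmless precisely because $\hom(\rela,\relb')$ is invariant under isomorphisms of $\relb'$.
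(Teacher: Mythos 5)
Your proposal is correct and follows exactly the route the paper intends: Proposition~\ref{prop:condens} gives the linear combination, the coefficient-$1$ observation guarantees $\relb$ appears among the $\relb_i$ with nonzero coefficient, and Theorem~\ref{thm:turing-equivalent} then yields $\hom(\cdot,\relb) \tred \condens(\cdot,\relb)$. The care you take in passing to isomorphism representatives in $\str^*[\sigma]$ is a nice touch but does not change the substance of the argument.
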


In the setting of graphs, results similar to these two corollaries
were obtained by Focke, Goldberg, and Zivny~\cite{FockeGoldbergZivny17-counting-surjhoms-and-compactions}.\footnote{
See their Theorem 30 and Theorem 13.
We remark that their Theorem 13 concerns compactions,
and in their setup, inputs are irreflexive graphs.}
We would like to emphasize that here,
these two corollaries fall out as very simple consequences
of a more general theory.

This work~\cite{FockeGoldbergZivny17-counting-surjhoms-and-compactions}
presented classifications of undirected graph templates
with respect to the problems of counting surjective homomorphisms
and of counting compactions.

Let us mention that,
for the \emph{decision} problem of checking
existence of a surjective homomorphism,
a complexity classification of templates seems to be currently elusive,
although there is work in this direction
(see for example~\cite{Chen14-hardness-surjective-csp,LaroseMartinPaulusma2017-surjectivehom-reflexive-digraphs} and the references therein).

\paragraph{Acknowledgements.} 
The author is grateful to Radu Curticapean and Holger Dell
for 
discussions about
and clear explanations of their joint work~\cite{CurticapeanDellMarx17-homomorphisms-good-basis} with
D\'aniel Marx.
The author thanks Stefan Mengel for his collaboration
on database queries~\cite{ChenMengel14-pp-arxiv,ChenMengel15-pp-icdt,ChenMengel16-pods-ep-counting,ChenMengel17-lics-logic-counting}, in which one can see
effects similar to those in the present work.
This work was supported by the
Spanish Project MINECO COMMAS TIN2013-46181-C2-R, Basque Project GIU15/30, and Basque Grant UFI11/45.

\bibliographystyle{plain}

\bibliography{/Users/hubiec/Dropbox/active/writing/hubiebib.bib}

\begin{thebibliography}{10}

\bibitem{Bulatov13-counting}
Andrei~A. Bulatov.
\newblock The complexity of the counting constraint satisfaction problem.
\newblock {\em J. {ACM}}, 60(5):34, 2013.

\bibitem{Chen14-hardness-surjective-csp}
Hubie Chen.
\newblock An algebraic hardness criterion for surjective constraint
  satisfaction.
\newblock {\em Algebra Universalis}, 72(4):393--401, 2014.

\bibitem{ChenMengel14-pp-arxiv}
Hubie Chen and Stefan Mengel.
\newblock A trichotomy in the complexity of counting answers to conjunctive
  queries.
\newblock {\em CoRR}, abs/1408.0890, 2014.

\bibitem{ChenMengel15-pp-icdt}
Hubie Chen and Stefan Mengel.
\newblock A trichotomy in the complexity of counting answers to conjunctive
  queries.
\newblock In {\em 18th International Conference on Database Theory, {ICDT}
  2015, March 23-27, 2015, Brussels, Belgium}, pages 110--126, 2015.

\bibitem{ChenMengel16-pods-ep-counting}
Hubie Chen and Stefan Mengel.
\newblock Counting answers to existential positive queries: A complexity
  classification.
\newblock In {\em Proceedings of the 35th ACM SIGMOD-SIGACT-SIGAI Symposium on
  Principles of Database Systems}, pages 315--326, 2016.

\bibitem{ChenMengel17-lics-logic-counting}
Hubie Chen and Stefan Mengel.
\newblock The logic of counting query answers.
\newblock In {\em 32nd Annual {ACM/IEEE} Symposium on Logic in Computer
  Science, {LICS} 2017, Reykjavik, Iceland, June 20-23, 2017}, pages 1--12,
  2017.

\bibitem{CurticapeanDellMarx17-homomorphisms-good-basis}
Radu Curticapean, Holger Dell, and D{\'{a}}niel Marx.
\newblock Homomorphisms are a good basis for counting small subgraphs.
\newblock In {\em Proceedings of the 49th Annual {ACM} {SIGACT} Symposium on
  Theory of Computing, {STOC} 2017, Montreal, QC, Canada, June 19-23, 2017},
  pages 210--223, 2017.

\bibitem{DyerRicherby13-effective-dichotomy-counting}
Martin~E. Dyer and David Richerby.
\newblock An effective dichotomy for the counting constraint satisfaction
  problem.
\newblock {\em {SIAM} J. Comput.}, 42(3):1245--1274, 2013.

\bibitem{FockeGoldbergZivny17-counting-surjhoms-and-compactions}
Jacob Focke, Leslie~Ann Goldberg, and Stanislav Zivny.
\newblock The complexity of counting surjective homomorphisms and compactions.
\newblock {\em CoRR}, abs/1706.08786, 2017.

\bibitem{LaroseMartinPaulusma2017-surjectivehom-reflexive-digraphs}
Benoit Larose, Barnaby Martin, and Daniel Paulusma.
\newblock Surjective h-colouring over reflexive digraphs, 2017.

\bibitem{Lovasz12-book-networks-and-limits}
L{\'{a}}szl{\'{o}} Lov{\'{a}}sz.
\newblock {\em Large Networks and Graph Limits}, volume~60 of {\em Colloquium
  Publications}.
\newblock American Mathematical Society, 2012.

\bibitem{Lovasz67-operations-structures}
L.~Lovász.
\newblock Operations with structures.
\newblock {\em Acta Mathematica Academiae Scientiarum Hungarica},
  18(3-4):321--328, 1967.

\end{thebibliography}

\end{document}